\newcommand{\B}{\mathscr{B}}
\newcommand{\eps}{\varepsilon}
\newcommand{\U}{\mathcal{U}}
\def\ga {\alpha}
\def\gl {\lambda}
\def\bR {{\mathbb{R}}}
\def\cI {{\mathcal I}}
\def\cJ {{\mathcal J}}
\newtheorem{theorem}{Theorem}[section]
\newtheorem{lemma}[theorem]{Lemma}
\newtheorem{corollary}[theorem]{Corollary}
\theoremstyle{remark}
\newtheorem{remark}[theorem]{\bf Remark}
\newenvironment{demo}{\noindent {\it Proof.--}
      \begin{quotation}\noindent}{\end{quotation}\hfill$\square $}
\begin{document}

\bibliographystyle{plain}

\title{ Schr\"odinger operators on a half-line with inverse square potentials}

\author {Hynek Kova\v{r}\'{\i}k}

\address {Hynek Kova\v{r}\'{\i}k, DICATAM, Sezione di Matematica, Universit\`a degli studi di Brescia, Italy}

\email {hynek.kovarik@ing.unibs.it}

\author {Fran\c{c}oise Truc}

\address {Fran\c{c}oise Truc, Unit{\'e} mixte de recherche CNRS-UJF 5582,
BP 74, 38402-Saint Martin d'H\`eres Cedex (France) }

\email {francoise.truc@ujf-grenoble.fr}

\begin {abstract}
We consider Schr\"odinger operators $H_\alpha$ given by equation \eqref{ham} below.
We study 
the asymptotic behavior of the spectral density $E(H_\alpha, \lambda)$ for $\lambda \to 0$ and the 
$L^1\to  L^\infty$ dispersive estimates associated to the evolution
 operator $e^{-i t H_\alpha}$. In particular we prove that for positive values of
 $\alpha$, the spectral density $E(H_\alpha, \lambda)$ tends to zero as $\lambda \to 0$  with higher
 speed compared to the spectral density of Schr\"odinger operators with a short-range potential $V$. 
We then show how the long time behavior of  $e^{-i t H_\alpha}$ depends 
on $\alpha$. More precisely we show that the decay rate of  $e^{-i t H_\alpha}$  for $t\to\infty$  
can be made arbitrarily large provided we choose $\alpha$ large enough and consider  a suitable operator norm.
\end{abstract}
\maketitle

\section{\bf Introduction}
\label{sec-intro}
\noindent This paper is concerned with Schr\"odinger operators 
\begin{equation} \label{ham}
H_\ga =-\frac{d^2}{dx^2} + \frac{\alpha}{x^2}\, , \qquad    \alpha \, \geq \, -\frac 14, 
\end{equation}
in $L^2(\bR^+)$ with Dirichlet condition at $x=0$. In particular, we are interested in the dependence of various 
spectral properties of $H_\alpha$ on the parameter $\alpha$. Note that potentials of the type
 $\alpha/x^2$ have a special role, since the resulting operator $H_\alpha$  is
scaling invariant. Moreover, it is known that 
the potentials which satisfy $V(x) \sim x^{-2}$ as $x\to\infty$ represent a borderline case 
for certain important spectral inequalities such as dispersive or Strichartz estimates, 
see \cite{gvv}.

It is therefore not surprising that Schr\"odinger operators with inverse square
 potentials have recently attracted certain attention; we 
might mention for example the heat kernel bounds obtained in \cite{ms,ms2}, or Strichartz estimates 
in dimension three studied in \cite{bpst1,bpst2}. Inverse 
square potentials appear naturally also in connection with two-dimensional Schr\"odinger
 operators with Aharonov-Bohm-type magnetic field, see 
\cite{fffp, gk}.  

\smallskip

\noindent Main objects of our interest here are the spectral density
\begin{equation} \label{stone}
E(H_\alpha, \lambda) = \frac{1}{2\pi i}\ \lim_{\eps\to 0+} \left( (H_\alpha-\lambda-i\eps)^{-1} - (H_\alpha-\lambda+i \eps)^{-1} \right), \qquad \lambda >0
\end{equation}
of $H_\alpha$, and the unitary group $e^{-i t H_\alpha}$. In particular, we are going to study 
the asymptotic behavior of $E(H_\alpha, \lambda)$ for $\lambda \to 0$ and the 
$L^1\to  L^\infty$ dispersive estimates associated to the evolution
 operator $e^{-i t H_\alpha}$. It is very well-known that the asymptotic behavior of 
$E(H_\alpha, \lambda)$ for small $\lambda$ is closely related to the asymptotic behavior
 of $e^{-i t H_\alpha}$ for large $t$. There is a huge amount of literature on this subject, 
see e.g. \cite{eg, gsch, JK, mu, sch, sch2, wed, wed2} and references therein. We are not going to 
discuss this connection any further since it will not be used in our proofs. 

For general one-dimensional Schr\"odinger operators of the type $H_V = -\frac{d^2}{dx^2} + V$ 
the behavior of both $E(H_V, \lambda)$ and 
$e^{-i t H_V}$ is known provided the potential $V$ decays fast enough at infinity. In particular,
if zero is a regular point of $H_V$, (which is the generic case), then 
\begin{equation} \label{E-short}
E(H_V, \lambda) \ \sim \  \lambda^{\frac 12} , \qquad \lambda \to 0,
\end{equation}
in a suitable operator topology, see \cite{go, mu, sch2, wed2}. Accordingly, for such short range potentials, under certain regularity conditions, Murata \cite{mu} proved
\begin{equation} \label{murata}
\|\, w^{-1}\, e^{-i t H_V}\, w^{-1} \, \|_{L^2(\bR) \to L^2(\bR)} \ \leq \ C \,  t^{-\frac 32} \qquad \forall \ t>2,
\end{equation} 
where $w$ is a weight function with a sufficient growth at infinity. The corresponding  $L^1\to  L^\infty$ was established by Schlag 
\begin{equation} \label{schlag}
\|\, \rho^{-1}\, e^{-i t H_V}\, \rho^{-1} \, \|_{L^1(\bR) \to L^\infty(\bR)} \ \leq \ C \,  t^{-\frac 32} \qquad \forall \ t>2,
\end{equation} 
with $\rho(x) =(1+ |x|)$, see \cite{sch2}. It is important to mention that the decay
 conditions on $V$, under which all the above results were obtained, imply that $V(x) = o(x^{-2})$ 
as $|x|\to\infty$. 

\smallskip

The goal of the present note is to show that if $V$ is of type $\alpha\, x^{-2}$ with $\alpha >0$,
 then the asymptotic relation \eqref{E-short} is no longer valid and has to replaced by a new one, 
 and, on the other hand, the estimates \eqref{murata} and \eqref{schlag} can be improved. 
 In particular $E(H_\alpha, \lambda)$ decays {\em faster} to zero than in \eqref{E-short}, 
see Theorem \ref{thm-1}. Accordingly the decay in the dispersive estimate \eqref{schlag}
 can be improved provided the weight function $\rho$ grows fast enough at infinity, see 
Theorem \ref{thm-2}. Although our results regard a family of Schr\"odinger operators with explicit
potentials, it can be expected that similar results should hold also if $H_\alpha$ is perturbed by 
a sufficiently short-range perturbation.

It should be finally mentioned that our main results, i.e. Theorems  \ref{thm-1} and \ref{thm-2}, fail in the case of 
Schr\"odinger operators on the whole line due to the presence of the zero resonance.

\section{\bf Main results}
\label{sec-results}

\subsection{Notation} We set $\rho(x) =1+x$ on $\bR^+$. For any $s\in\bR$ we denote
$$
L^2_s(\bR^+) = \{ u : \|  \rho^{\, s}\,  u\|_{L^2(\bR^+)} < \infty \}, \qquad \|u\|_{0,s} :=  \|  \rho^{\, s}\,  u\|_{L^2(\bR^+)}.
$$
Let $\B(s,s')$ be the space of bounded linear operators from $L^2_s(\bR^+)$ to $L^2_{s'}(\bR^+)$ and let $\|\cdot\|_{\B(s,s')}$ denote the 
corresponding operator norm. Finally, we put 
\begin{equation} \label{nu}
\nu =\sqrt{1/4 +\ga}.
\end{equation}

\noindent We have

\begin{theorem} \label{thm-1}
Let $\alpha > -1/4$. Then for any $\eps>0$ and any $s \geq \nu +1 +\eps$ it holds 
\begin{equation} \label{eq-density}
E(H_\alpha, \lambda) = E_0\, \lambda^\nu + \mathcal{O}(\lambda^{\nu+\eps}) \qquad \lambda\to 0+
\end{equation}
in $\B(s,-s)$, where $E_0$ is the integral operator in $L^2(\bR^+)$ with the kernel 
$$
E_0(x,y) = \frac{ (x\, y)^{\nu+\frac 12}}{2^\nu\, \Gamma^2(\nu+1)}.
$$
\end{theorem}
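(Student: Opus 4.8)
The plan is to compute the resolvent kernel of $H_\alpha$ explicitly, use Stone's formula \eqref{stone} to read off the spectral density kernel, and then extract the $\lambda\to 0+$ asymptotics carefully in the weighted operator norm $\B(s,-s)$. Because $H_\alpha$ is a regular Sturm--Liouville operator on $\bR^+$ with the inverse square potential, the two fundamental solutions of $(H_\alpha-z)u=0$ are expressible through Bessel functions: the solution that is $L^2$ near $0$ (satisfying the Dirichlet-type boundary condition determined by the Friedrichs extension) behaves like $x^{\nu+1/2}$, namely $\sqrt{x}\,J_\nu(\sqrt{z}\,x)$, and the solution that is $L^2$ near $+\infty$ for $\operatorname{Im}\sqrt z>0$ is $\sqrt{x}\,H^{(1)}_\nu(\sqrt z\,x)$. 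The Green's function is then $G_z(x,y)=\sqrt{xy}\,J_\nu(\sqrt z\,x_<)\,H^{(1)}_\nu(\sqrt z\,x_>)$ up to the Wronskian constant, with $x_<=\min(x,y)$, $x_>=\max(x,y)$. First I would record this formula and the Wronskian normalization precisely.

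Next, I would insert this into Stone's formula. Taking the boundary values $z=\lambda\pm i0$ for $\lambda>0$ corresponds to $\sqrt z\to\pm\sqrt\lambda$ (on the two sides of the cut), and the jump in the resolvent kernel reduces, via the identities $H^{(1)}_\nu-H^{(2)}_\nu=2iJ_\nu$ (after also accounting for the branch behaviour of $J_\nu$ at negative argument, which for the relevant solution only produces a phase), to a clean expression: the kernel of $E(H_\alpha,\lambda)$ is a constant multiple of $\lambda^{1/2}\sqrt{xy}\,J_\nu(\sqrt\lambda\,x)\,J_\nu(\sqrt\lambda\,y)$. Then the $\lambda\to 0$ behaviour is governed entirely by the small-argument expansion $J_\nu(t)=\frac{(t/2)^\nu}{\Gamma(\nu+1)}\big(1+O(t^2)\big)$. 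Substituting this gives the leading term $E_0\,\lambda^\nu$ with exactly the stated kernel $E_0(x,y)=(xy)^{\nu+1/2}/(2^\nu\Gamma^2(\nu+1))$, and the next correction is $O(\lambda^{\nu+1})$ pointwise — so the claimed remainder $\mathcal{O}(\lambda^{\nu+\eps})$ for $\eps\le 1$ is even a little weaker than the pointwise truth; the role of $\eps$ is to absorb the weights.

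The main obstacle is not the formal computation but making the asymptotic expansion valid in the operator norm $\B(s,-s)$, i.e. controlling $\|\rho^{-s}(E(H_\alpha,\lambda)-E_0\lambda^\nu)\rho^{-s}\|_{L^2\to L^2}$ uniformly. The difficulty is the interplay between the two regimes $\sqrt\lambda\,x\lesssim 1$ and $\sqrt\lambda\,x\gtrsim 1$: for $\sqrt\lambda\,x$ small the expansion of $J_\nu$ is what we want, but for $\sqrt\lambda\,x$ large one must instead use the asymptotics $J_\nu(t)\sim\sqrt{2/(\pi t)}\cos(t-\ldots)$, where $\sqrt{x}J_\nu(\sqrt\lambda\,x)=O(\lambda^{-1/4})$ is bounded but oscillatory, and the difference with the monomial approximation is large pointwise. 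The remedy is to estimate the Hilbert--Schmidt norm of the weighted kernel: one splits the integral $\int\int$ over the regions according to whether $\sqrt\lambda\,x\lessgtr 1$ (and similarly in $y$), uses $|\sqrt{x}J_\nu(\sqrt\lambda x)|\le C\min\{(\sqrt\lambda x)^{\nu+1/2},\lambda^{-1/4}\}$ together with $\big|\sqrt{x}J_\nu(\sqrt\lambda x)-\frac{(\sqrt\lambda x/2)^\nu}{\Gamma(\nu+1)}\sqrt x\big|\le C(\sqrt\lambda x)^{\nu+1/2}\min\{\sqrt\lambda x,1\}$, and checks that the weight $\rho^{-s}(x)\sim x^{-s}$ with $s>\nu+1$ makes $\int_0^\infty x^{2(\nu+1/2)}\,x^{-2s}\,dx<\infty$ and makes the large-argument contributions, weighted, of order $\lambda^{(s-\nu-1/2)/1}$ or better — hence $o(\lambda^\nu)$ with a power saving that one can match to $\lambda^{\nu+\eps}$ precisely when $s\ge\nu+1+\eps$. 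I would organize this as: (i) exact kernel of $E(H_\alpha,\lambda)$; (ii) pointwise two-sided bounds and the pointwise remainder bound for $\sqrt x J_\nu$; (iii) a Hilbert--Schmidt estimate of the weighted remainder kernel, splitting dyadically in $\sqrt\lambda x$ and $\sqrt\lambda y$, to conclude the bound in $\B(s,-s)$.
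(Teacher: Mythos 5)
Your proposal follows the same backbone as the paper: the two Bessel solutions $\sqrt{x}\,J_\nu(\sqrt{\gl}\,x)$ and $\sqrt{x}\,\bigl(J_\nu+iY_\nu\bigr)(\sqrt{\gl}\,x)$ give the resolvent kernel, Stone's formula yields the product kernel $E(\gl,x,y)=\tfrac12\sqrt{xy}\,J_\nu(\sqrt{\gl}\,x)J_\nu(\sqrt{\gl}\,y)$, and the remainder is controlled through a weighted Hilbert--Schmidt bound based on the pointwise estimates $|J_\nu(z)|\le C_\mu z^{\mu}$, $\mu\in[-1/2,\nu]$, and the convergence of $\int_0^\infty x^{2\nu+1}(1+x)^{-2s}dx$ for $s>\nu+1$. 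The one genuine difference is in the key lemma: the paper does not bound $E(\gl)-\gl^\nu E_0$ directly, but applies the Taylor formula to the Hilbert--Schmidt-valued map $\gl\mapsto\rho^{-s}\gl^{-\nu}E(\gl)\rho^{-s}$, computing $\partial_\gl(\gl^{-\nu}E(\gl,x,y))$ by the recurrence $J_\nu'(z)=-J_{\nu+1}(z)+\tfrac{\nu}{z}J_\nu(z)$ (so the monomial part drops out automatically) and showing the derivative has HS norm $\mathcal{O}(\gl^{-1+\eps})$, which integrates to the claimed remainder. Your route instead estimates the difference kernel directly, splitting at $\sqrt{\gl}\,x\lessgtr 1$; since the kernel is a sum of two product (rank-one) pieces, the weighted HS norm factorizes into exactly the one-dimensional integrals you describe, so this works and is arguably more elementary, while the paper's derivative trick keeps the bookkeeping tidier.

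Two points to repair when writing it up. First, the spectral density with respect to $d\lambda$ carries no factor $\lambda^{1/2}$: the jump of the resolvent gives $\tfrac12\sqrt{xy}\,J_\nu(\sqrt{\gl}\,x)J_\nu(\sqrt{\gl}\,y)$, and the extra $\lambda^{1/2}$ you insert (which belongs to the density in the variable $k=\sqrt{\lambda}$, $d\lambda=2k\,dk$) would shift the leading power to $\lambda^{\nu+1/2}$, contradicting the statement; your conclusion silently drops it, so fix the normalization via the Wronskian. Second, your pointwise bounds should read $|\sqrt{x}\,J_\nu(\sqrt{\gl}\,x)|\le C\,\gl^{-1/4}\min\{(\sqrt{\gl}\,x)^{\nu+1/2},1\}$ and, once the exponents are tracked, the splitting yields a remainder of order $\gl^{\nu+(s-\nu-1)/2}$, i.e.\ you need $s\ge\nu+1+2\eps$ to reach $\mathcal{O}(\gl^{\nu+\eps})$, not ``precisely when $s\ge\nu+1+\eps$''; since $\eps>0$ is arbitrary this is immaterial for the substance of the theorem (the paper's own choice $\mu=\nu-1+2\eps$ in the integral $\cI(\gl)$ involves the same factor-of-two adjustment), but the claim as you phrased it is not what your estimate delivers.
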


\begin{remark}
 Equation \eqref{eq-density} shows  that for positive values of $\alpha$ the 
density $E(H_\alpha, \lambda)$ is of lesser order than in the case of a short-range potential, see equation \eqref{E-short}.
\end{remark}

\begin{remark}
For a throughout discussion of threshold expansion of resolvents of one-dimensional operators with short-range potentials we refer to \cite{jn}. Asymptotic behaviour of Schr\"odinger groups generated by operators with inverse square decay on conical manifolds was studied in \cite{wa} in the setting of weighted $L^2-$spaces. 
\end{remark}

\begin{theorem} \label{thm-2}
Let $\alpha \geq -1/4$. Then for any $s\in [0, \nu+1/2]$ there exists a constant $C(\alpha,s)$ such that  
\begin{equation} \label{eq-t-decay}
\|\, \rho^{ -s}\, e^{-i t H_\alpha}\, \rho^{-s} \, \|_{L^1(\bR^+) \to L^\infty(\bR^+)} \ \leq \ C(\alpha,s)\ t^{-\frac 12-s} \qquad \forall \ t>0.
\end{equation}
\end{theorem}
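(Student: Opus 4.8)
The plan is to obtain the dispersive estimate directly from an explicit representation of the integral kernel of $e^{-itH_\alpha}$ in terms of Bessel functions, exploiting the fact that $H_\alpha$ on the half-line is diagonalized by the Hankel transform of order $\nu$. Concretely, writing $J_\nu$ for the Bessel function of the first kind, the spectral resolution of $H_\alpha$ gives
\begin{equation} \label{eq-kernel-plan}
e^{-itH_\alpha}(x,y) = \int_0^\infty e^{-itk^2}\, k\, \sqrt{xy}\; J_\nu(kx)\, J_\nu(ky)\ dk ,
\end{equation}
and this integral can in fact be evaluated in closed form via Weber's second exponential integral, yielding
\begin{equation} \label{eq-weber-plan}
e^{-itH_\alpha}(x,y) = \frac{\sqrt{xy}}{2it}\; \exp\!\left( \frac{i(x^2+y^2)}{4t}\right) J_\nu\!\left( \frac{xy}{2t}\right).
\end{equation}
First I would justify \eqref{eq-weber-plan} rigorously (the right-hand side of \eqref{eq-kernel-plan} is only conditionally convergent, so one regularizes $t\to t-i\eps$, applies the Weber formula, and passes to the limit), and record the elementary bounds $|J_\nu(z)|\le C z^\nu$ for $0<z\le 1$ and $|J_\nu(z)|\le C z^{-1/2}$ for $z\ge 1$, valid since $\nu\ge 0$.

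Next I would plug \eqref{eq-weber-plan} into the weighted norm. Since the $L^1\to L^\infty$ operator norm of an integral operator is the sup of the absolute value of its kernel, we have
\begin{equation} \label{eq-sup-plan}
\|\, \rho^{-s}\, e^{-itH_\alpha}\, \rho^{-s}\,\|_{L^1\to L^\infty} = \sup_{x,y>0} \frac{\sqrt{xy}}{2t}\, (1+x)^{-s}(1+y)^{-s}\, \left| J_\nu\!\left(\frac{xy}{2t}\right)\right| .
\end{equation}
The estimate then reduces to a two-variable calculus problem. In the regime $xy\le 2t$ I would use $|J_\nu(xy/2t)|\le C (xy/2t)^\nu$, so the quantity in \eqref{eq-sup-plan} is bounded by $C\, t^{-1-\nu}(xy)^{\nu+1/2}(1+x)^{-s}(1+y)^{-s}$; since $s\ge 0$ and $\nu+\tfrac12 \le s+(\nu+\tfrac12-s)$, and using $xy\le 2t$ to trade the excess powers of $xy$ for powers of $t$, one checks that the supremum over this region is $O(t^{-1/2-s})$ exactly when $0\le s\le \nu+\tfrac12$. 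In the complementary regime $xy\ge 2t$ I would use $|J_\nu(xy/2t)|\le C(xy/2t)^{-1/2}$, which makes the $\sqrt{xy}$ prefactors cancel and leaves $C\, t^{-1/2}(1+x)^{-s}(1+y)^{-s}\le C\, t^{-1/2}$; this is even better than needed for $t$ bounded away from $0$, and for small $t$ the constraint $xy\ge 2t$ again forces $x$ or $y$ to be large, recovering the extra decay. Combining the two regions gives \eqref{eq-t-decay}.

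The main obstacle is the bookkeeping in the first regime: one must verify that the exponent balancing genuinely works for the whole range $s\in[0,\nu+\tfrac12]$ and not just at the endpoints, and in particular that the weights $(1+x)^{-s}$, $(1+y)^{-s}$ are exactly strong enough to absorb the factor $(xy)^{\nu+1/2}$ after spending the budget $xy\le 2t$. A clean way to organize this is to substitute $x=\sqrt{t}\,\xi$, $y=\sqrt{t}\,\eta$ and separately treat $\sqrt t\le 1$ and $\sqrt t\ge 1$, so that $\rho$ is comparable either to $1$ or to $\sqrt t\,\xi$; then \eqref{eq-sup-plan} becomes a scale-free supremum times an explicit power of $t$, and one reads off that the power is $t^{-1/2-s}$ precisely under the stated hypothesis on $s$. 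The only other point requiring care is the rigorous derivation of \eqref{eq-weber-plan}, but this is standard once the $i\eps$-regularization is in place; alternatively one can avoid \eqref{eq-weber-plan} altogether and estimate \eqref{eq-kernel-plan} directly by splitting the $k$-integral at $k=(xy)^{-1/2}$ and using stationary phase, at the cost of a messier argument giving the same bounds.
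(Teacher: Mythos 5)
Your proposal follows essentially the same route as the paper: diagonalization of $H_\alpha$ by the Hankel transform, evaluation of the kernel of $e^{-(\eps+it)H_\alpha}$ via the Weber integral and passage to the limit $\eps\to 0+$, and then a pointwise bound on the weighted kernel, whose supremum is the $L^1\to L^\infty$ norm. The paper collapses your two-regime split into one line by using the interpolated bound $|J_\nu(z)|\le C(\mu,\nu)\,z^{\mu}$ with $\mu=s-\tfrac12\in[-\tfrac12,\nu]$, so that $\sqrt{xy}\,|J_\nu(xy/2t)|\le C\,(xy)^{s}\,t^{\frac12-s}$ and $(xy)^{s}(1+x)^{-s}(1+y)^{-s}\le 1$ finish the proof; your split also works, but note that in the regime $xy\ge 2t$ the unweighted bound $C\,t^{-1/2}$ suffices for $t\le 1$ rather than for $t$ bounded away from $0$ --- for large $t$ you need $(1+x)(1+y)\ge xy\ge 2t$ to extract the missing factor $t^{-s}$, which is exactly the observation you already make, just attributed to the wrong range of $t$.
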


\smallskip

\begin{remark}
For $-1/4 < \alpha \leq 0$ the dispersive estimate \eqref{eq-t-decay} can be derived from \cite[Thm.1.11]{fffp} by considering the restriction of inequality 
\cite[Eq.(1.29)]{fffp} to radial functions. On the other hand, the result for $\alpha>0$, namely the faster decay of $ e^{-i t H_\alpha}$ in $t$ is new. The 
maximal decay rate  $t^{-1-\nu}$, achieved by the choice $s=\nu+1/2$,  should be compared with the $t^{-\frac 32}$ decay rate in the estimate \eqref{schlag}. 
\end{remark}

\begin{remark}
Note also that in the border-line case $\alpha=-1/4$, which means $\nu=0$, Theorem \ref{thm-2} with the choice $s=1/2$ gives the decay rate $t^{-1}$, which is  the decay rate 
of the free evolution $e^{i t \Delta}$ in dimension two. This is not surprising since the operator $-\frac{d^2}{dx^2} - \frac{1}{4 x^2}$ in $L^2(\bR^+)$ 
with Dirichlet boundary condition at zero is unitarily equivalent, by means of the unitary mapping $f(x)\mapsto \sqrt{x} \ f(x)$, to the Laplacian $-\Delta$ 
in $L^2(\bR^2)$ restricted to radial functions.  
\end{remark}

\noindent An immediate consequence of Theorem \ref{thm-2} is the following 

\begin{corollary}
Let $\alpha \geq -1/4$. Then for any $s\in [0, \nu+1/2]$ and any $\beta >s+1/2$ there exists a constant $C_2$, depending only on $\alpha, \beta$ and $s$, such that  
\begin{equation} \label{L2}
\|\, \rho^{-\beta}\, e^{-i t H_\alpha}\, \rho^{-\beta} \, \|_{L^2(\bR^+) \to L^2(\bR^+)} \ \leq \ C_2\ t^{-\frac 12-s} \qquad \forall \ t>0.
\end{equation}
\end{corollary}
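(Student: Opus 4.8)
The plan is to deduce \eqref{L2} from Theorem \ref{thm-2} by splitting off the excess weight and gluing the $L^1$ and $L^\infty$ endpoints to $L^2$ by Cauchy--Schwarz. Write $\beta = s + \delta$ with $\delta = \beta - s > 1/2$. The point of the hypothesis $\beta > s + 1/2$ is precisely that $\rho^{-\delta}\in L^2(\bR^+)$: indeed $\int_0^\infty (1+x)^{-2\delta}\, dx = (2\delta-1)^{-1} < \infty$. I would first record the two elementary mapping properties of the multiplication operator $\rho^{-\delta}$ that follow at once from Cauchy--Schwarz: it is bounded from $L^2(\bR^+)$ to $L^1(\bR^+)$ with norm $\|\rho^{-\delta}\|_{L^2(\bR^+)}$, and bounded from $L^\infty(\bR^+)$ to $L^2(\bR^+)$ with the same norm.

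Next I would factor the operator occurring in \eqref{L2} as
\[
\rho^{-\beta}\, e^{-itH_\alpha}\, \rho^{-\beta} \;=\; \rho^{-\delta}\,\bigl( \rho^{-s}\, e^{-itH_\alpha}\, \rho^{-s} \bigr)\, \rho^{-\delta},
\]
and read it off as the composition $L^2(\bR^+) \xrightarrow{\ \rho^{-\delta}\ } L^1(\bR^+) \xrightarrow{\ \rho^{-s} e^{-itH_\alpha}\rho^{-s}\ } L^\infty(\bR^+) \xrightarrow{\ \rho^{-\delta}\ } L^2(\bR^+)$. Multiplying the three operator norms and using Theorem \ref{thm-2} for the middle factor (which is exactly where the restriction $s\in[0,\nu+1/2]$ comes from, inherited verbatim) gives
\[
\|\rho^{-\beta}\, e^{-itH_\alpha}\, \rho^{-\beta}\|_{L^2(\bR^+)\to L^2(\bR^+)} \;\le\; \|\rho^{-\delta}\|_{L^2(\bR^+)}^2\, C(\alpha,s)\, t^{-\frac12-s},
\]
for all $t>0$, so one may take $C_2 = (2(\beta-s)-1)^{-1}\, C(\alpha,s)$, which depends only on $\alpha,\beta,s$ as required.

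I do not expect any genuine obstacle in this argument; it is a soft interpolation-by-weights step. The only thing worth double-checking is that the multiplication operators $\rho^{-\delta}$ really do the claimed job between the endpoint spaces on the half-line, and that the square-integrability of $\rho^{-\delta}$ is equivalent to $\beta-s>1/2$ — both of which are immediate.
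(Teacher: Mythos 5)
Your proposal is correct and is essentially the paper's own argument: the paper takes $u\in L^2_\beta$, applies Cauchy--Schwarz to show $f=\rho^s u\in L^1(\bR^+)$ with $\|f\|_{L^1}\lesssim\|\rho^\beta u\|_{L^2}$ (your $\rho^{-(\beta-s)}:L^2\to L^1$ step), and then integrates $\rho^{2s-2\beta}$ against the $L^\infty$ bound from Theorem \ref{thm-2} (your $\rho^{-(\beta-s)}:L^\infty\to L^2$ step). Your explicit factorization $\rho^{-\beta}e^{-itH_\alpha}\rho^{-\beta}=\rho^{-(\beta-s)}\bigl(\rho^{-s}e^{-itH_\alpha}\rho^{-s}\bigr)\rho^{-(\beta-s)}$ merely packages the same two uses of Cauchy--Schwarz in operator form, with the same role for the hypothesis $\beta>s+1/2$.
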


\begin{proof} 
Let $u\in L^2_\beta(\bR^+)$ and let $f = \rho^{s}\, u$. Then by the Cauchy-Schwarz inequality $f\in L^1(\bR^+)$ and 
\begin{align} \label{c-s}
\| f \|^2_{L^1(\bR^+)} & =  \left( \int_0^\infty \rho^{s}\, \rho^{-\beta}\, \rho^\beta\, u\, dx \right)^2 \leq\  C_1\, \|\, \rho^\beta\, u\|^2_{L^2(\bR^+)},
\end{align}
where we have used the fact that $\beta> s+1/2$. Hence from Theorem \ref{thm-2} and \eqref{c-s} we obtain
\begin{align*}
\|\, \rho^{-\beta}\, e^{-i t H_\alpha}\, u\, \|^2_{L^2(\bR^+)} & = \int_0^\infty \rho(x)^{2s-2\beta}\, | \, \rho^{-s}\, e^{-i t H_\alpha}\, \rho^{-s}\, f\, |^2\, dx  \\
& \leq \, C^2(\alpha,s)\ t^{-1-2s}\, \| \, f\|^2_{L^1(\bR^+)} \\
& \leq \, C^2(\alpha,s)\, C_1 \ t^{-1-2s}\ \|\, \rho^\beta\, u\|^2_{L^2(\bR^+)}.
\end{align*}
This proves \eqref{L2}.
\end{proof}

\noindent Inequality \eqref{L2} should be compared with the estimate \eqref{murata} valid for short-range potentials.

\section{\bf Proofs}
\subsection{Proof of Theorem \ref{thm-1}} 
For simplicity we shall drop the index 
$\alpha$ in the sequel and write $E(\gl)$ instead of $E(\ga ,\gl)$. We also use the notation 
$$
R(\gl,x,y) = \lim_{\eps\to 0+} (H_\alpha-\lambda -i \eps)^{-1} (x,y).
$$
We first study the solutions $u \in L^2 (\bR^+)$ of the generalized eigenvalue equation 
\begin{equation}\label{bessel}
-u''-\frac{\alpha}{x^2}= \gl u.
\end{equation}
After setting $u(x)= \sqrt x \, \psi (\sqrt{\gl}\, x)$, equation \eqref{bessel} writes
\begin{equation}\label{bess}
z^2\psi''-x +z\, \psi' + (z^2-\nu^2)\, \psi= 0\ ,
\end{equation}
with $z=\sqrt{\gl} \, x$. The latter is a Bessel equation of the first kind, see \cite[Sec.9.1]{as}. We now find two solutions $u_1, \, u_2$ of \eqref{bessel} 
which satisfy $u_1(0)=0$ and $u_2 \in L^2(1,\infty)$ for Im$\, \lambda >0$.  Since 
$$
|\, J_\nu (z) +i \, Y_\nu (z)  \, | \, \sim \, \sqrt{\frac{2}{z\pi}}\  |\, e^{\, i z}\, | \, ,  \qquad |z| \to \infty, \quad \text{Im} \, z >0,
$$
see \cite[Eqs.9.1.3, 9.2.3]{as}, and 
\begin{equation} \label{j-zero}
J_\nu(z) = \frac{(z/2)^\nu}{\Gamma(\nu+1)} + o(z^\nu) \qquad z\to 0. 
\end{equation}
by \cite[Eq.9.1.7]{as}, the sought solutions $u_1$ and $u_2$ take the form 
\begin{align}
u_1(x) &=  \sqrt x \ J_\nu (\sqrt{\gl}\, x)  \label{sol1} \\
u_2(x) &= \sqrt x\  (J_\nu (\sqrt{\gl}\, x) +i \, Y_\nu (\sqrt{\gl}\, x)) \label{sol2} \ .
\end{align}
Hence by the theory of Sturm-Liouville problems we obtain the resolvent kernel 
 \begin{align}
  R(\gl,x,y) & = \frac{i\pi}{2} \sqrt {xy} \ J_\nu (\sqrt{\gl} \,x)\, (J_\nu (\sqrt{\gl}\, y)
 +i\, Y_\nu (\sqrt{\gl} \, y))
\qquad (x\leq y )  \label{res} \\
  R(\gl,x,y) & = \frac{i\pi}{2} \sqrt {xy}\  J_\nu (\sqrt{\gl}\, y)\, (J_\nu (\sqrt{\gl}\, x)
 +i\, Y_\nu (\sqrt{\gl}\, x))
\qquad (x\geq y )\label{reso}
\end{align}
The Stone formula \eqref{stone} then implies that
\begin{equation}\label{dens}
  E(\gl,x,y) = \frac{1}{\pi}\, {\rm Im}\, R(\gl,x,y)  = \frac{1}{2} \sqrt {xy}\  J_\nu (\sqrt{\gl}\, x)\, J_\nu (\sqrt{\gl}\, y)\ .
\end{equation}
From \eqref{j-zero} we now easily verify that 
\begin{equation} \label{E0}
\lim_{\lambda\to 0+}\, \lambda^{-\nu}\,  E(\gl,x,y) =  \frac{ (x\, y)^{\nu+\frac 12}}{2^\nu\, \Gamma^2(\nu+1)} = E_0(x,y). 
\end{equation}
Let us define the rest term $E_1(\lambda)$ as the integral operator in $L^2(\bR^+)$ with the kernel given by 
\begin{equation}\label{densi}
  E_1(\gl,x,y) =  E(\lambda, x,y)  -   E_0(x,y) \, \gl^\nu. 
\end{equation}

\smallskip

\noindent To prove  Theorem \ref{thm-1} we need the following

\begin{lemma}\label{first}
For any $\eps>0$ and any $s > \nu +1+\eps$ we have 
\begin{equation} \label{rest}
\| \,  E_1(\gl)\,  \|_{\B(s,-s)}  = \mathcal{O}(\gl^{\nu+\eps}) \qquad \lambda\to 0+.
\end{equation}
\end{lemma}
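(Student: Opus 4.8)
The strategy is to estimate the integral operator $E_1(\lambda)$ by bounding its kernel $E_1(\lambda,x,y)$ pointwise and then invoking a Schur-type test adapted to the weighted spaces $\B(s,-s)$. Recall from \eqref{dens} and \eqref{densi} that
$$
E_1(\lambda,x,y) = \tfrac12 \sqrt{xy}\, \Bigl( J_\nu(\sqrt\lambda\, x)\, J_\nu(\sqrt\lambda\, y) - \frac{(\sqrt\lambda\, x)^\nu (\sqrt\lambda\, y)^\nu}{2^{2\nu}\Gamma^2(\nu+1)} \Bigr).
$$
The first step is to record the two standard facts about the Bessel function $J_\nu$ that control the two regimes: for $z\to 0$ one has the refined expansion $J_\nu(z) = \frac{(z/2)^\nu}{\Gamma(\nu+1)}\bigl(1 + \mathcal{O}(z^2)\bigr)$, which for any fixed $\eps\le 2$ (and hence in particular small $\eps$) gives $\bigl| J_\nu(z) - \frac{(z/2)^\nu}{\Gamma(\nu+1)}\bigr| \le C\, z^{\nu+\eps}$ for $0\le z\le 1$; and for $z\ge 1$ one has the global bound $|J_\nu(z)| \le C\, z^{-1/2}$, while of course $\frac{(z/2)^\nu}{\Gamma(\nu+1)} \le C z^\nu$ for all $z\ge 0$. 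Writing $J_\nu(\sqrt\lambda x) J_\nu(\sqrt\lambda y) - c_\nu(\sqrt\lambda x)^\nu(\sqrt\lambda y)^\nu$ as a telescoping sum $\bigl(J_\nu(\sqrt\lambda x) - c_\nu^{1/2}(\sqrt\lambda x)^\nu\bigr) J_\nu(\sqrt\lambda y) + c_\nu^{1/2}(\sqrt\lambda x)^\nu\bigl(J_\nu(\sqrt\lambda y) - c_\nu^{1/2}(\sqrt\lambda y)^\nu\bigr)$, one then splits $\bR^+\times\bR^+$ into the four regions determined by $\sqrt\lambda x \lessgtr 1$, $\sqrt\lambda y \lessgtr 1$ and estimates each factor by the appropriate one of the two regimes above.

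The outcome of this case analysis is a pointwise bound of the schematic form
$$
|E_1(\lambda,x,y)| \ \le\ C\, \lambda^{\nu+\eps}\, (xy)^{\nu+\frac12+\eps}\, \chi_{\{\sqrt\lambda x\le 1\}}\chi_{\{\sqrt\lambda y\le 1\}} \ +\ \text{(analogous terms where a $+\frac12$ sits in place of $\nu$ once $\sqrt\lambda\,\cdot\, \ge 1$),}
$$
and in every region the total power of $\lambda$ that can be extracted (after estimating $\sqrt\lambda x \le 1$ or using $(\sqrt\lambda x)^{-1/2}\le(\sqrt\lambda x)^{\nu}$ when $\sqrt\lambda x\ge1$, since $\nu\ge 0$) is at least $\nu+\eps$, while the residual power of $x$ (resp. $y$) is at most $\nu+\frac12+\eps$. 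The key point is that this leftover growth $(xy)^{\nu+\frac12+\eps}$ in $x,y$ is integrable against the weights once $s>\nu+1+\eps$: to bound $\|E_1(\lambda)\|_{\B(s,-s)}$ it suffices, by Schur's test in the weighted setting, to control
$$
\sup_{y} \rho(y)^{-s}\int_0^\infty |E_1(\lambda,x,y)|\,\rho(x)^{-s}\,dx
\qquad\text{and the symmetric quantity,}
$$
and with $|E_1(\lambda,x,y)|\lesssim \lambda^{\nu+\eps}(xy)^{\nu+\frac12+\eps}$ the $x$-integral $\int_0^\infty x^{\nu+\frac12+\eps}\rho(x)^{-s}\,dx$ converges precisely when $s>\nu+\frac32+\eps$; adjusting $\eps$ (the statement allows it to be replaced by $\eps/2$ throughout, or one simply absorbs the extra $1/2$ into the hypothesis $s\ge\nu+1+\eps$ by taking a slightly smaller exponent in the Bessel expansion) closes the estimate and yields $\|E_1(\lambda)\|_{\B(s,-s)} = \mathcal{O}(\lambda^{\nu+\eps})$.

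The main obstacle is the bookkeeping in the region where exactly one of $\sqrt\lambda x$, $\sqrt\lambda y$ exceeds $1$: there the factor containing the large argument is only $\mathcal{O}((\sqrt\lambda\,\cdot)^{-1/2})$ rather than genuinely small, so one must be careful that the power of $\lambda$ salvaged from the \emph{small}-argument factor (which is $\sqrt\lambda\cdot$ raised to $\nu+\eps$) together with what is salvaged from the large-argument factor (using $(\sqrt\lambda x)^{-1/2} = (\sqrt\lambda x)^{\nu+\eps}\cdot(\sqrt\lambda x)^{-\nu-\eps-1/2}\le (\sqrt\lambda x)^{\nu+\eps}$ since the exponent $-\nu-\eps-1/2<0$ and $\sqrt\lambda x\ge 1$) still totals at least $\lambda^{\nu+\eps}$, while the compensating growth in $x$ is no worse than $x^{\nu+\frac12+\eps}$ so that the weighted integral still converges under the stated hypothesis on $s$. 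Once this region is handled, the remaining three regions are strictly easier and the Schur test finishes the proof.
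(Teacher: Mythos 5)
Your kernel decomposition is essentially fine: the telescoping of $J_\nu(\sqrt\gl x)J_\nu(\sqrt\gl y)-c_\nu^2(\gl xy)^\nu$, the refined small-argument expansion, and the large-argument bound $|J_\nu(z)|\le Cz^{-1/2}$ do give a global pointwise estimate of the form $|E_1(\gl,x,y)|\lesssim \gl^{\nu+\eps}\,(xy)^{\nu+\frac12+\eps}$ (with the usual relabelling of $\eps$). The gap is in the passage from this kernel bound to the operator bound. The Schur test you invoke requires $\sup_y\rho(y)^{-s}\int_0^\infty x^{\nu+\frac12+\eps}\rho(x)^{-s}\,dx<\infty$, hence $s>\nu+\tfrac32+\eps$, and this extra $\tfrac12$ is \emph{not} an epsilon that can be adjusted away. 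Taking a smaller exponent in the Bessel expansion lowers the power of $\gl$ you gain in lockstep with the power of $x$ you leave behind, and even in the limit you would still need $s>\nu+\tfrac32$, because the column integral $\int_0^\infty y^{\nu+\frac12}\rho(y)^{-s}\,dy$, coming from the telescoping factor that carries no $\eps$ at all, already forces it. So, as written, your argument proves the lemma only for $s>\nu+\tfrac32+\eps$ and leaves the range $\nu+1+\eps<s\le\nu+\tfrac32$ of the statement uncovered; the sentence claiming the $\tfrac12$ can be absorbed into the hypothesis $s\ge\nu+1+\eps$ is incorrect.

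The repair is cheap and keeps all of your kernel estimates: replace the Schur test by the Hilbert--Schmidt norm of $\rho^{-s}E_1(\gl)\rho^{-s}$. Squaring the kernel halves the weight requirement, since $\int_0^\infty x^{2\nu+1+2\eps}(1+x)^{-2s}\,dx<\infty$ exactly when $s>\nu+1+\eps$, and the Hilbert--Schmidt norm dominates the operator norm, so \eqref{rest} follows on the full stated range (up to the same harmless relabelling of $\eps$ that the paper itself uses). For comparison, the paper's proof does not expand the kernel directly: it applies the Taylor formula to $\gl^{-\nu}E(\gl)$ at $\gl=0$, computes $\partial_\gl(\gl^{-\nu}E(\gl,x,y))$ via the recurrence relations for $J_\nu$, and shows that the weighted Hilbert--Schmidt norm of this derivative is $\mathcal{O}(\gl^{-1+\eps})$ using the same pointwise bound $|J_\nu(z)|\le C(\mu,\nu)z^\mu$, $\mu\in[-\tfrac12,\nu]$, that you use. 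The decisive ingredient shared with the correct argument is precisely the Hilbert--Schmidt (rather than Schur) estimate in the weighted space; your direct expansion of $E_1(\gl,x,y)$, combined with it, gives an alternative proof of the same strength.
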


\begin{demo} We will use the fact that 
\begin{equation} \label{equiv}
\| \,  E_1(\gl)  \, \|_{\B(s,-s)}  = \| \rho^{-s} E_1(\gl) \rho^{-s} \|_{L^2(\bR^+)\to L^2(\bR^+)} .
\end{equation}
From (\ref {densi}) we get that
\begin{equation}\label{Tayl}
 \rho^{-s} \gl^{-\nu} E(\gl)\,  \rho^{-s}=  \rho^{-s}\, E_0\, \rho^{-s} + \rho^{-s} \gl^{-\nu}
 E_1(\gl)\,  \rho^{-s} .
\end{equation}
Note that the operator $E_0$ is Hilbert-Schmidt in $\B(s,-s)$. This follows from the identity \eqref{equiv} applied to $E_0$. 
Hence by applying the Taylor formula to the operator $\rho^{-s} \gl^{-\nu} E(\gl)\,  \rho^{-s}$ at $\gl =0$ we find that   
the claim of the Lemma will follow if we show that 
\begin{align}
 \| \, \rho^{-s} \, \partial_\gl(\gl^{-\nu} E(\gl))\, \rho^{-s} \|_{HS(\bR^+)} & = \| \, \rho^{-s} \, \partial_\gl(\gl^{-\nu} E_1(\gl))\, \rho^{-s} \|_{HS(\bR^+)}  \nonumber \\
  & =  \mathcal{O} (\gl^{-1+\eps})\  \quad \lambda\to 0, \label{bdnorm}
\end{align}
where $\|\cdot \|_{HS(\bR^+)}$ denotes the Hilbert-Schmidt norm in $L^2(\bR^+)$. 
Using the recurrence relations for the derivatives of $J_\nu$: 
\begin{align*}
J'_\nu (z) & = -J_{\nu +1} (z) + \frac{\nu}{z}\,  J_\nu (z) \\
J'_\nu (z) & = J_{\nu +1} (z) - \frac{\nu}{z}\,  J_\nu (z),
\end{align*}
see \cite[Eq.9.1.27]{as},  we get from (\ref{dens})
\begin{align}
 \partial_\gl (\gl^{-\nu}E(\gl,x,y)) &=
 -\frac{\gl^{-\nu-1/2}}{4 }\sqrt {x\, y}\ \Big[ \ (x\,  J_{\nu +1} 
(\sqrt{\gl}\, x)\,  J_\nu (\sqrt{\gl}\,  y)\nonumber \\
& \qquad \qquad\qquad \qquad \ 
+ y\,  J_{\nu +1} (\sqrt{\gl}\, y)\, 
J_\nu (\sqrt{\gl} \, x)\ \Big] \ . \label{deriv}
\end{align}
Hence by the Cauchy-Schwarz inequality 
\begin{align}
& \|  \rho^{-s} \, \partial_\gl(\gl^{-\nu} E(\gl))\, \rho^{-s} \|_{HS(\bR^+)} ^2  = \nonumber \\
 & \qquad  \qquad= 
 \int_0^{\infty} \int_0^{\infty} |\partial_\gl (\gl^{-\nu}E(\gl,x,y))|^2 \, \rho(x)^{-2s} \rho(y)^{-2s}\, dxdy \nonumber\\
 &  \qquad \qquad \leq \ C\, \lambda^{-1-2\nu}\  \cI(\lambda)\, \cJ (\lambda)  \label{int}
 \end{align}
where
\begin{align*}
\cI(\lambda) & = \int_0^{\infty} x^3\,  J_{\nu +1}^2 (\sqrt{\gl}\, x)\,  (1+x)^{-2s} dx \\
\cJ(\lambda) & = \int_0^{\infty} y \, J_{\nu}^2 (\sqrt{\gl}\, y)\,  (1+y)^{-2s} dy
\end{align*}
and $C$ is a constant independent of $\lambda$. To estimate the last two integrals we will need a point-wise 
estimate on the Bessel function $J_\nu$. From the integral representation 
$$
J_\nu(z) = \frac{2 \, (\frac z2)^\nu}{\sqrt{\pi}\ \Gamma(\nu+\frac 12)}\, \int_0^1 (1-t^2)^{\nu-\frac 12}\, \cos(zt)\, dt,
$$
see \cite[Eq.9.1.20]{as}, it follows that $| J_\nu(z)| \leq C_\nu\, z^\nu$ for all $z>0$ and $\nu>0$. 
On the other hand, by \cite[Eq.9.1.20]{as} we have $| J_\nu(z)| \leq 1$ for all $z>0$ and $\nu\geq 0$. A combination of these two upper bounds then 
implies that 
for any $-1/2 \leq \mu\leq \nu$ there exists a constant $C(\mu,\nu)$ such that 
\begin{equation} \label{j-upperb}
| J_\nu(z)| \ \leq \ C(\mu,\nu)\ z^\mu \qquad \forall\ z>0, \quad \forall\  \mu \in \left[-\frac 12, \, \nu\right ]. 
\end{equation}
Using \eqref{j-upperb} with $\mu=\nu-1+2\eps$ in $\cI(\lambda)$ and with $\mu=\nu$ in $\cJ(\lambda)$ together with the fact that $s \geq \nu+1+\eps$, we find 
$$
\cI(\lambda) = \mathcal{O}(\gl^{\nu-1+2\eps}) ,  \qquad \cJ(\lambda) = \mathcal{O}(\gl^{\nu})  \qquad \lambda\to 0.
$$
In view of  \eqref{int} this implies \eqref{bdnorm} and therefore completes the proof. 
\end{demo}

\noindent Theorem \ref{thm-1} now follows from \eqref{densi} and \eqref{rest}.

\subsection{Proof of Theorem \ref{thm-2}}
 We will prove Theorem \ref{thm-2} by estimating the integral kernel  of the operator 
$e^{-it H_\alpha}$. To provide a formula for the integral kernel, we will follow  \cite[Sec.5]{k11}, where the formula for the integral of the heat semi-group $e^{-t H_\alpha}$ was established, see also \cite{gk}. Equation \eqref{dens} in combination with the  Weyl-Titchmarsh-Kodaira Theorem, cf. \cite[Chap.13]{ds}, shows that the operator $H_\alpha$ is unitarily equivalent to a multiplication operator, namely we have
\begin{equation} \label{diag}
\U_\nu \, H_\alpha \, \U_\nu ^{-1}\, f(p) = p\, f(p), \qquad f \in \U_\nu (D(H_\alpha)),
\end{equation}
where $D(H_\alpha)$ denotes the operator domain of $H_\alpha$ and the mappings 
$\U_\nu ,\, \U_\nu^{-1}: L^2(\bR_+) \to L^2(\bR_+)$ are given by
\begin{equation}\label{fb}
\begin{aligned} 
(\U_\nu \, g)(p) & = \int_0^\infty g(x) \sqrt{x}\, J_\nu(x\sqrt{p})\, dx \\
(\U_\nu ^{-1} f)(x) & = \frac 12\, \int_0^\infty f(p)\sqrt{x}\, J_\nu(x\sqrt{p})\, dp
\end{aligned} 
\end{equation}
The mapping $\U_\nu$ and $\U_\nu^{-1}$ define unitary operators on $L^2(\bR_+)$. Let $g\in C_0^\infty(\bR^+)$. By \cite[Thm.3.1]{T}
\begin{equation} \label{limit}
e^{-i t H_\alpha }\, g  = \lim_{\eps\to 0+} e^{-(\eps+i t)\, H_\alpha }\, g.
\end{equation}
In view of \eqref{diag} we thus get
\begin{align}
 & \lim_{\eps\to 0+} \big( e^{-(\eps+i t) H_\alpha }\, g\big)(r)  = 
 \lim_{\eps\to 0+}  \big( \U_\nu ^{-1}\, e^{-(\eps+i t)
\, p}\, \U_\nu \, g\big)(x) \nonumber \\
& \qquad \qquad = \lim_{\eps\to 0+}  \frac 12\, \int_0^\infty \sqrt{xy}
\int_0^\infty e^{-(\eps+i t)\, p}  J_\nu (x\sqrt{p})
J_\nu(y \sqrt{p})\,  dp\,  g(y) \, dy \nonumber \\
& \qquad \qquad= \lim_{\eps\to 0+} \frac{1}{2 (\eps+i t) }\, \int_0^\infty \sqrt{xy}\, \,
I_\nu \left(\frac{x y}{2(\eps+i t)}\right)\,
e^{-\frac{x^2+y^2}{4(\eps+i t)}}\, g(y)\, dy.  \label{epsilon}
\end{align}
where we have used \cite[Eq.4.14(39)]{erde} to calculate the integral with respect to
$p$. Moreover, from  \cite[Eq.9.6.18]{as} it follows that the function 
\begin{equation} \label{I-estim}
 I_\nu \Big(\frac{xy}{2(\eps+i t)}\Big ) \, e^{-\frac{x^2+y^2}{4(\eps+i t)}} 
\end{equation}
is bounded on every compact interval uniformly with respect to $\eps>0$.
 Since the support of $g$ is compact, we can use the dominated theorem and interchange the 
limit and integration in \eqref{epsilon}. Taking the limit $\eps\to 0$ and using the 
identity $I_\nu( iz) = e^{-i \nu\pi/2} J_\nu(z)$, see \cite[Eq.9.6.3]{as}, we obtain
\begin{equation} \label{eq-kenrel}
 \big( e^{-i t H_\alpha }\, g\big) (x)  = \frac{1}{2 i t }\, \int_0^\infty \sqrt{xy}\, \,
J_\nu \left(\frac{x y}{2 t}\right)\, e^{-\frac{x^2+y^2}{4 i t}}\, 
 e^{-\frac{i \nu\pi}{2}} \, g(y)\, dy.
\end{equation}
Now we apply the upper bound \eqref{j-upperb} with $\mu =s-1/2 \in [-1/2, \nu] $ and $z= \frac{x y}{2 t}$.
 This yields
\begin{equation} \label{eq-sup}
\sup_{x,y \in\bR^+}\, \Big |\,  \rho(x)^{-s}\, \sqrt{xy}\, \, J_\nu \left(\frac{x y}{2 t}\right)\, \rho(y)^{-s} \Big| \ < \, C(\alpha,s)\ t^{\frac 12-s}. 
\end{equation} 
The last equations now imply that 
$$
\|\, \rho^{ -s}\, e^{-i t H_\alpha}\, \rho^{-s}\, f \, \|_{L^\infty(\bR^+)} \ \leq \ C(\alpha,s)\ t^{-\frac 12-s}\, \| f\|_{L^1(\bR^+)}
$$
for all $f\in L^1(\bR^+)$. This proves inequality \eqref{eq-t-decay}. 


\section*{\bf Acknowledgements}
H.K. would like to thank the Institut Fourier in Grenoble  for the warm hospitality extended to him during his stay. 
The work of H.K. has been partially supported by the MIUR-PRIN'2010-11 grant for the project  ''Calcolo della variation''. Both authors thank the referee whose remarks and comments helped them improve the original version of the text. 

\bibliographystyle{amsalpha}

\end{document}